\newcommand{\norm}[2]{\| #1 \|_{#2}}
\newcommand{\expval}[1]{\mathbb{E} \left \{ #1 \right\}}
\newcommand{\diag}[1]{\mathrm{diag} \left ( #1 \right)}
\newcommand{\complexset}{\mathbb{C}}
\newcommand{\realset}{\mathbb{R}}
\newcommand{\integerset}{\mathbb{Z}}
\newcommand{\set}[1]{\left\{ #1 \right\}}
\newcommand{\realpart}[1]{\mathfrak{R}\left\{ #1 \right\}}
\newcommand{\uniformdist}[2]{\sim \mathcal{U} \left( #1, #2 \right)}
\newcommand{\sinc}{\mathrm{sinc}}
\newcommand{\floor}[1]{\left\lfloor #1 \right\rfloor}
\theoremstyle{plain}
\newtheorem{lem}{Lemma}
\theoremstyle{definition}
\newtheorem{defn}{Definition}
\theoremstyle{remark}
\newtheorem{rem}{Remark}
\newacronym{3d}{3D}{three dimensional}
\newacronym[plural=ACKs]{ack}{ACK}{acknowledgment}
\newacronym{aoa}{AoA}{angle of arrival}
\newacronym{awgn}{AWGN}{additive white Gaussian noise}
\newacronym{aod}{AoD}{angle of departure}
\newacronym{rat}{RAT}{reflected-angular training}
\newacronym[plural=APs, firstplural=access points (APs)]{ap}{AP}{access point}
\newacronym{b5g}{B5G}{Beyond-5G}
\newacronym[plural=BSs, firstplural=base stations (BSs)]{bs}{BS}{base station}
\newacronym{cc}{CC}{control channel}
\newacronym{ce}{CE}{configuration estimation}
\newacronym{csi}{CSI}{channel state information}
\newacronym{cdf}{cdf}{cumulative distribution function}
\newacronym{crc}{CRC}{cyclic redundancy check}
\newacronym{crlb}{CRLB}{Cram\'er-Rao lower bound}
\newacronym{dc}{DC}{direct current}
\newacronym{dsp}{DSP}{digital signal processing}
\newacronym{dl}{DL}{downlink}
\newacronym{doa}{DoA}{direction-of-arrival}
\newacronym{embb}{eMBB}{enhanced mobile broadband}
\newacronym{emf}{EMF}{electromagnetic field}
\newacronym{em}{EM}{electromagnetic}
\newacronym{fp}{FP}{fractional program}
\newacronym{glrt}{GLRT}{generalized likelihood ratio test}
\newacronym[plural=HRISs, firstplural=Hybrid Reconfigurable Intelligent Surfaces (HRISs)]{hris}{HRIS}{hybrid reconfigurable intelligent surface}
\newacronym{iid}{i.i.d.}{independent and identically distributed}
\newacronym{ios}{IoS}{Internet-of-Surfaces}
\newacronym{iot}{IoT}{Internet-of-Things}
\newacronym[plural=KPIs, firstplural=key performance indicators (KPIs)]{kpi}{KPI}{key performance indicator}
\newacronym{ls}{LS}{least-squares}
\newacronym{lf}{LF}{low frequency}
\newacronym{los}{LoS}{line-of-sight}
\newacronym{lti}{LTI}{linear time invariant}
\newacronym{mac}{MAC}{medium access control}
\newacronym{mimo}{MIMO}{multiple-input multiple-output}
\newacronym{mmimo}{M-MIMO}{massive MIMO}
\newacronym{miso}{MISO}{multiple-input single-output}
\newacronym{ml}{ML}{machine learning}
\newacronym{mle}{ML}{maximum-likelihood estimator}
\newacronym{mmse}{MMSE}{minimum mean squared error}
\newacronym{mmtc}{mMTC}{massive machine-type communications}
\newacronym{mrc}{MRC}{maximum-ratio combining}
\newacronym{mse}{MSE}{mean-squared error}
\newacronym{nlos}{NLoS}{non-line-of-sight}
\newacronym{ofdm}{OFDM}{orthogonal frequency-division multiplexing}
\newacronym{pla}{PLA}{planar linear array}
\newacronym{pap}{P\&P}{plug-and-play}
\newacronym{ppp}{PPP}{Poisson point process}
\newacronym{ra}{RA}{random access}
\newacronym{rap}{RAP}{random access procedure}
\newacronym{rf}{RF}{radio frequency}
\newacronym{rmse}{RMSE}{root-mean-square error}
\newacronym{rss}{RSS}{received signal strength}
\newacronym{se}{SE}{squared error}
\newacronym{sdp}{SDP}{semidefinite programming}
\newacronym{sdr}{SDR}{semidefinite relaxation}
\newacronym{sic}{SIC}{successive interference cancellation}
\newacronym{sinr}{SINR}{signal-to-interference-plus-noise ratio}
\newacronym{smse}{SMSE}{sum mean squared error}
\newacronym{sdma}{SDMA}{space-division multiple-access}
\newacronym{snr}{SNR}{signal-to-noise ratio}
\newacronym{soa}{SoA}{state-of-the-art}
\newacronym{sre}{SRE}{smart radio environment}
\newacronym{toa}{ToA}{time-of-arrival}
\newacronym[plural=UEs, firstplural=users' equipment (UEs)]{ue}{UE}{user's equipment}
\newacronym{tdm}{TDM}{time-division multiplexing}
\newacronym{tdma}{TDMA}{time-division multiple access}
\newacronym{tdd}{TDD}{time-division duplex}
\newacronym{tem}{TEM}{transverse electromagnetic mode}
\newacronym{uatf}{UatF}{use-and-then-forget}
\newacronym{ul}{UL}{uplink}
\newacronym{ula}{ULA}{uniform linear array}
\newacronym{upa}{UPA}{uniform planar array}
\newacronym{urllc}{URLLC}{ultra-reliable-low-latency communication}
\newacronym{mr}{MR}{Maximal-ratio}
\newacronym{acf}{ACF}{autocorrelation function}
\newacronym{ccf}{CCF}{cross-correlation function}
\newacronym{rhs}{RHS}{right-hand side}
\newacronym{lhs}{LHS}{left-hand side}
\newacronym{ar1}{AR$(1)$}{first-order autoregressive}
\newacronym{wss}{WSS}{wide-sense stationary}
\newacronym[plural=pdfs]{pdf}{pdf}{probability density function}
\newacronym{tx}{Tx}{transmitter}
\newacronym{rx}{Rx}{receiver}
\newacronym[plural=RISs]{ris}{RIS}{reconfigurable intelligent surface}
\def\BibTeX{{\rm B\kern-.05em{\sc i\kern-.025em b}\kern-.08em
    T\kern-.1667em\lower.7ex\hbox{E}\kern-.125emX}}
\begin{document}

\title{Randomized Control of Wireless Temporal Coherence via Reconfigurable Intelligent Surface
    \thanks{
        J. H. I. de Souza and T. Abrão are with the Department of Electrical Engineering, Universidade Estadual de Londrina, Londrina, Brazil; E-mail: joaohis@outlook.com and taufik@uel.br.
            
        V. Croisfelt, F. Saggese, and P. Popovski are with the Department of Electronic Systems, Aalborg University, Aalborg, Denmark; E-mail: \{vcr, fasa, petarp\}@es.aau.dk.
    }
}

\author{
    \IEEEauthorblockN{ 
        João Henrique Inacio de Souza,
        Victor Croisfelt,
        Fabio Saggese,
        Taufik Abrão,
        and Petar Popovski
    }
}

\maketitle

\begin{abstract}
    A reconfigurable intelligent surface (RIS) can shape the wireless propagation channel by inducing controlled phase shift variations to the impinging signals. Multiple works have considered the use of \acrshort{ris} by time-varying configurations of reflection coefficients. In this work we use the \acrshort{ris} to control the channel coherence time and introduce a generalized discrete-time-varying channel model for \acrshort{ris}-aided systems. We characterize the temporal variation of channel correlation by assuming that a configuration of \acrshort{ris}' elements changes at every time step. The analysis converges to a randomized framework to control the channel coherence time by setting the number of \acrshort{ris}' elements and their phase shifts. The main result is a framework for a flexible block-fading model, where the number of samples within a coherence block can be dynamically adapted.
\end{abstract}

\begin{IEEEkeywords}
    Reconfigurable intelligent surface (RIS), temporal correlation, fading.
\end{IEEEkeywords}

\section{Introduction}\label{sec:introduction}

A \gls{ris} consists of a massive number of passive reflecting elements able to alter the phase shifts and possibly the amplitude of impinging wireless signals~\cite{bjornson2022}, thereby exerting control over the wireless propagation. Some RIS instances can be seen as passive holographic \gls{mimo} surfaces. Numerous use cases have been proposed to show how such control can benefit the communication between a \gls{tx} and a \gls{rx}, where the prevailing focus is on improving the communication performance~\cite{bjornson2022}. However, relatively few works explore how to use the \gls{ris} to induce changes that induce temporal diversity in the wireless channel and avoid prolonged unfavorable propagation to a given user.  

To illustrate, consider the channel aging problem that occurs due to \gls{rx} mobility which makes the \gls{csi} acquired by the \gls{tx} unreliable over time. Works as~\cite{chen2022,zhang2022,jiang2023,papazafeiropoulos2022} suggest the use of an \gls{ris} to deal with this problem by compensating for the Doppler effects of mobility. The focus of these works typically relies on optimizing the \gls{ris}' configurations of the elements aiming to minimize the channel aging effect. In~\cite{matthiesen2021}, a continuous-time propagation model is given and is used to configure the \gls{ris} in such a way that the received power is maximized whereas the delay and Doppler spread are minimized. 
The authors in~\cite{sun2021} study the spatial-temporal correlation implied by the \gls{ris} when it is embedded in an isotropic scattering environment. Nevertheless, these prior works do not analyze how the temporal channel statistics, such as the coherence time, can be modeled as a function of the properties of the \gls{ris}' elements. In this paper, we focus on studying how the \gls{ris} can shape temporal channel statistics by relying on a discrete-time-varying channel model. A closely related work is~\cite{Besser2021} where the authors proposed an \gls{ris} phase hopping scheme with the purpose of transforming a slow-fading into a fast-fading channel. This was achieved by randomly varying the \gls{ris}' configurations, significantly improving the outage performance without the need for \gls{csi} at the \gls{ris}. Nevertheless, this paper focuses on how the random variation of phases impacts the outage performance, while here we analyze temporal channel correlation that stems from the properties of the \gls{ris}.

We propose a generalized discrete-time{-varying} channel model for \gls{ris}-aided communication systems, showing {how the} part of the propagation environment controlled by the \gls{ris} shapes the discrete temporal channel statistics. We characterize the temporal variation of channel correlation as the \gls{ris}' {reflections} configuration changes {at every} time index. This analysis reveals how one can control the coherence time of the channel by changing the number of \gls{ris}' reflecting elements and their {phase shift} configurations. Our findings corroborate the results from~\cite{Besser2021}, and prove the possibility of using the \gls{ris} to generate a flexible block-fading model. 

\noindent\textit{Notation}. Boldface lowercase $\mathbf{a}$ and uppercase $\mathbf{A}$ letters represent vectors and matrices, respectively. Calligraphic letters $\mathcal{A}$ represent finite sets. Operators: transpose by $\{\cdot\}^T$, complex conjugate by $\{\cdot\}^*$, and real part by $\mathfrak{R} \{\cdot\}$. Important functions are: $\floor{\cdot}$ the floor function, $\delta[\cdot]$ the Kronecker's delta function, and $\sinc(\theta)= \frac{\sin(\theta)}{\theta}$. The expected value operator is $\expval{\cdot}$ and, unless otherwise stated, it is taken w.r.t. the variable $k$. The complex Gaussian distribution is denoted as $\mathcal{CN} \left( \mu, \sigma^2 \right)$ with mean $\mu$ and variance $\sigma^2$, whereas a uniform random distribution over the range $[a,b]$ is $\mathcal{U} \left( a, b \right)$.

\section{System Model}\label{sec:system-model}

The communication setup consists of one single-antenna \gls{tx}, one single-antenna \gls{rx}, and one \gls{ris} with $N \in \integerset_+$ passive reflecting elements, operating in narrowband communication channel and in free space. The wireless channel consists of two distinct radio paths from the \gls{tx} to the \gls{rx}, the direct path and the reflected path controlled  by the \gls{ris}, see Fig.~\ref{fig:wireless-environment}. The index of the complex samples in the discrete-time domain is denoted by $k \in \integerset$.

Considering the downlink, let $h_{\mathrm{D}}[k] \in \complexset$ denote the channel coefficient from the \gls{tx} to the \gls{rx}, $g_n[k] \in \complexset$ denote the channel coefficient from the \gls{tx} to the $n$-th reflecting element of the \gls{ris}, and $h_n[k] \in \complexset$ denote the channel coefficient from the $n$-th reflecting element of the \gls{ris} to the \gls{rx}, $n \in \set{1,\dots,N}$. Let the channel vectors from the \gls{tx} to the \gls{ris} and from the \gls{ris} to the \gls{rx} be $\mathbf{g}[k] = [g_1[k],\dots, g_N[k]]^T$ and $\mathbf{h}[k] = [ h_1[k],  \dots , h_N[k] ]^T$, respectively. Consider then that the $n$-th reflecting element of the \gls{ris} can induce a phase shift of $\phi_n[k] \in [0,2\pi)$ upon the impinging signal with marginal impact on the signal's amplitude. Thus, we denote the \gls{ris}' configuration impressed at time $k$ as the matrix $\boldsymbol{\Psi}[k] = \diag{[\psi_1[k],\dots,\psi_N[k]]^T}$, whose $n$-th diagonal entry is the \textit{reflection coefficient} $\psi_n[k] = e^{-j\phi_n[k]}$ of the $n$-th \gls{ris} element.\footnote{The \gls{ris} is considered to have unitary attenuation to simplify the presentation. The generalization is considered to be straightforward.} Using the narrowband model from~\cite{bjornson2022}, we assume that a \gls{ris} configuration is constant within the time of each sample $k$.\footnote{During the switching time from a configuration to another, the \gls{ris} can generate an unpredictable channel behavior. Throughout the paper, we assume that this effect is negligible considering that the switching time is much lower than the time reserved for each configuration.} Moreover, we assume that the \gls{ris}' elements have a flat frequency response, preserving the coherence bandwidth of the equivalent channel, $B_c > 0$.\footnote{$B_c$ is inversely proportional to the \textit{channel delay spread}, determined by the multiple reflection delays of the signal between the \gls{tx} and \gls{rx}.} The equivalent channel $h_{\mathrm{eq}}[k] \in \complexset$ from the \gls{tx} to the \gls{rx} is then~\cite{bjornson2022}:
\begin{align}
    \nonumber
    h_{\mathrm{eq}}[k] &= h_{\mathrm{D}}[k] + \mathbf{g}^T[k] \boldsymbol{\Psi}[k] \mathbf{h}[k],\\
    \label{eq:equivalent-channel}
    &= \underbrace{\vphantom{\sum_{n=1}^N}h_{\mathrm{D}}[k]}_{\text{Uncontrollable}} + \underbrace{\sum_{n=1}^N g_n[k] h_n[k] \psi_n[k]}_{\text{Controllable}}.
\end{align}
Eq. \eqref{eq:equivalent-channel} is made up of two different time-variant terms: the \textit{uncontrollable component} depending only on the properties of the propagation environment, and the \textit{controllable components} controlled by changing the \gls{ris}' configurations. 

\begin{figure}
    \centering
    \includegraphics[width=\columnwidth]{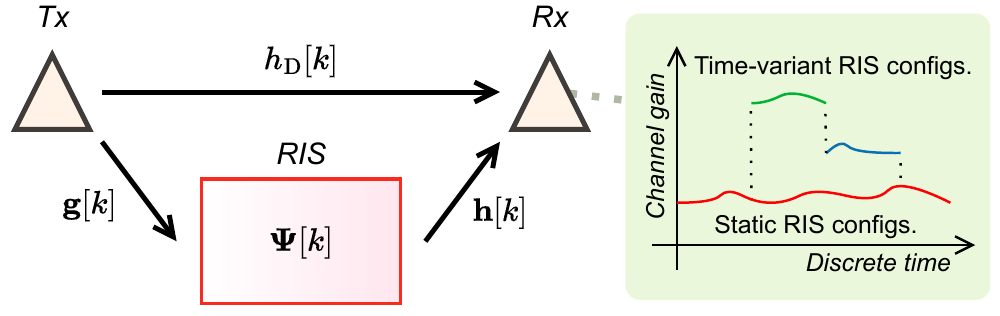}
    \vspace{-7mm}
    \caption{RIS-aided communication system, where the \gls{ris}' elements imposing time-variant reflection configurations can alter the channel {response}.}
    \label{fig:wireless-environment}
\end{figure}

\section{Discrete-Time-Varying Channel Model}

In this section, we present a discrete-time-varying model for the channels $h_{\mathrm{D}}[k]$, $\mathbf{g}[k]$, and $\mathbf{h}[k]$ in order to understand the impact of switching the reflection coefficients of the \gls{ris} over the time index $k$. Based on such a model, we show how the temporal correlation of the equivalent channel behaves by characterizing its \gls{acf} w.r.t. $k$.

\subsection{The Model}

Without loss of generality, let $h[k] \in \complexset$ denote a generic channel coefficient sample from one of the channels $h_{\mathrm{D}}[k]$, ${g}_n[k]$, and ${h}_n[k]$. To represent both \gls{los} and \gls{nlos} components, we assume the time-variant Rician fading model as follows \cite{kurt2022}:
\begin{equation}
    \label{eq:rician-model}
    h[k] = \bar{h} + \check{h}[k],
\end{equation}
where $\bar{h} \in \complexset$ denotes the time-invariant \gls{los} channel component and $\check{h}[k] \in \complexset$ denotes the time-variant \gls{nlos} channel component. Specifically, the \gls{nlos} component is modeled as a stationary \gls{ar1} random process with recurrence relation \cite{baddour2005,truong2013,Wang1996}\footnote{
In \cite{baddour2005}, the authors demonstrated that the AR model could be considered for the computer simulation of correlated fading channels, corroborating that low orders are appropriate for narrowband Doppler fading processes. Moreover, \cite{truong2013,Wang1996} unveil that an \gls{ar1} model is enough to capture most of the channel tap dynamics.}
\begin{equation}
    \label{eq:multipath-component}
    \check{h}[k] = \alpha_h \check{h}[k-1] + \sqrt{1-\alpha_h^2} \sigma_h w[k],
\end{equation}
where $0 \leq \alpha_h < 1$ denotes the \gls{ar1} parameter, $\sigma_h^2 > 0$ denotes the power of the \gls{nlos} component, and {$w[k]$ is a white stationary complex Gaussian process such that} $w[k] \sim \mathcal{CN} \left(0, 1\right)$. From \eqref{eq:rician-model} and the properties of the \gls{ar1}, it is straightforward to demonstrate that:
\begin{equation}
    \label{eq:rician-model-moments}
    \expval{h[k]} = \bar{h} \text{ and } \expval{\left|h[k]\right|^2} = \left|\bar{h}\right|^2 + \sigma_h^2.
\end{equation}
Hence, the Rice factor $\kappa_h > 0$ of $h[k]$ is defined as the ratio between the powers of the \gls{los} and \gls{nlos} components,
\begin{equation}
    \label{eq:rice-factor}
    \kappa_h = \left|\bar{h}\right|^2/\sigma_h^2.
\end{equation}
Given the channel model and the relationship between the powers of its time-invariant and -variant parts, we now define what are the intrinsic parameters to the environment, meaning that they are determined by the physical properties of the wireless propagation medium and the setup geometry, and \textit{cannot} be directly controlled by the system designer.

\begin{defn}
    (Set of Environmental Parameters) We denote as $\mathcal{E}_h = \set{\alpha_h, \kappa_h, {\sigma_h^2}}$  the set of environmental parameters.
\end{defn}

\subsection{Temporal Correlation of the Equivalent Channel}

We now carry out an analysis of the correlation among the channel samples over the time index $k$. Our first result is summarized in the following lemma.

\begin{figure*}[t]
    \begin{align}
        \label{eq:acf-equivalent-channel}
        R_{h_{\mathrm{eq}}h_{\mathrm{eq}}}[\tau] =
        \expval{h_{\mathrm{eq}}[k]h_{\mathrm{eq}}^*[k-\tau]} &=
        \left|\bar{h}_{\mathrm{D}}\right|^2 + \alpha_{h_{\mathrm{D}}}^{|\tau|} \sigma_{h_{\mathrm{D}}}^2 + \sum_{n = 1}^N \left(\left|\bar{g}_n\right|^2 + \alpha_{g_n}^{|\tau|} \sigma_{g_n}^2\right) \left(\left|\bar{h}_n\right|^2 + \alpha_{h_n}^{|\tau|} \sigma_{h_n}^2\right) R_{\psi_n\psi_n}[\tau] \; +\\
        \nonumber
        &+ \sum_{n = 1}^N \sum_{\substack{n' = 1 \\ n' \neq n}}^N \bar{g}_n\bar{g}_{n'}^* \bar{h}_n\bar{h}_{n'}^* R_{\psi_n\psi_{n'}}[\tau] +
        2 \realpart{\bar{h}_{\mathrm{D}}^* \sum_{n = 1}^N \bar{g}_n \bar{h}_n \expval{\psi_n[k]}}.
    \end{align}
    \hrule
\end{figure*}

\begin{figure*}[b]
    \hrule
    \begin{align}
        \nonumber
        R_{h_{\mathrm{eq}}'h_{\mathrm{eq}}'}[\tau] &= \left( 1 - \sinc^2(\theta) \right) \left( \sum_{n = 1}^N \left| g_n \right|^2 \left| \bar{h}_n \right|^2 + \sigma^2 \norm{\mathbf{\bar{g}}}{2}^2 \right) \delta[\tau] + \sinc^2(\theta) \left( \left| \mathbf{\bar{g}}^T\mathbf{\bar{h}} \right|^2 + \alpha^{|\tau|} \sigma^\norm{\mathbf{\bar{g}}}{2}^2 \right),\\
        \tag{12}
        \label{eq:acf-equivalent-channel-os2}
        &\overset{\mathrm{(a)}}{=} \sigma^2 \norm{\mathbf{\bar{g}}}{2}^2 \left[ \left( 1 - \sinc^2(\theta) \right) (\kappa + 1) \delta[\tau] + \sinc^2(\theta) \left( N\kappa\eta + \alpha^{|\tau|} \right) \right].
    \end{align}
    \begin{equation}
        \tag{14}
        \label{eq:correlation-coefficient-os2}
        \rho[\tau] = \frac{R_{h_{\mathrm{eq}}'h_{\mathrm{eq}}'}[\tau]}{R_{h_{\mathrm{eq}}'h_{\mathrm{eq}}'}[0]} = \frac{\left( 1 - \sinc^2(\theta) \right) (\kappa + 1) \delta[\tau] + \sinc^2(\theta) \left( N\kappa\eta + \alpha^{|\tau|} \right)}{\left( 1 - \sinc^2(\theta) \right) (\kappa + 1) + \sinc^2(\theta) \left( N\kappa\eta + 1 \right)}.
    \end{equation}
\end{figure*}

\begin{lem}
    \label{lem:acf}
    Consider that the channel coefficients $h_{\mathrm{D}}[k]$, $g_n[k]$, and $h_n[k]$, $\forall n$, follow the time-variant Rician model in eq. \eqref{eq:rician-model}. Then, the \gls{acf} $R_{h_{\mathrm{eq}}h_{\mathrm{eq}}} : \integerset \rightarrow \realset$ of the equivalent channel is given by eq. \eqref{eq:acf-equivalent-channel} at the top of the next page, where $R_{\psi_n\psi_{n'}} : \integerset \rightarrow \realset$ is the \gls{ccf} of the \gls{ris}' reflection coefficients, calculated for the discrete-time {delay $\tau$} as:
    \begin{equation}
        \label{eq:ccf-reflection-coefficients}
        R_{\psi_n\psi_{n'}}[\tau] = \expval{\psi_n[k]\psi_{n'}^*[k-\tau]}.
    \end{equation}
\end{lem}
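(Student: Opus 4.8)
The plan is to substitute the equivalent channel~\eqref{eq:equivalent-channel} into the ACF definition, expand the product $h_{\mathrm{eq}}[k]\,h_{\mathrm{eq}}^*[k-\tau]$ into four groups of terms---a direct--direct product $h_{\mathrm{D}}[k]h_{\mathrm{D}}^*[k-\tau]$, two direct--reflected cross products, and a reflected--reflected double sum over $n,n'$---and then evaluate each resulting expectation by linearity. I would assume throughout that $h_{\mathrm{D}}$, the family $\{g_n\}_n$, the family $\{h_n\}_n$, and the reflection coefficients $\{\psi_n\}_n$ are mutually independent, that distinct entries within each channel family are independent, and that each $\psi_n$ is (possibly random but) jointly stationary with the others, so that the overall ACF depends on $\tau$ alone.

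The key computational ingredient is the ACF of a single Rician coefficient. Starting from~\eqref{eq:rician-model} and the recurrence~\eqref{eq:multipath-component}, I would first confirm by stationarity that the NLoS variance equals $\sigma_h^2$, and then iterate the recurrence---using that $w[k]$ is independent of past values---to obtain $\expval{\check{h}[k]\check{h}^*[k-\tau]} = \alpha_h^{|\tau|}\sigma_h^2$, hence $\expval{h[k]h^*[k-\tau]} = |\bar{h}|^2 + \alpha_h^{|\tau|}\sigma_h^2$. Applied to the direct--direct term this yields the leading $|\bar{h}_{\mathrm{D}}|^2 + \alpha_{h_{\mathrm{D}}}^{|\tau|}\sigma_{h_{\mathrm{D}}}^2$. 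In the reflected double sum, independence factors each summand into the $g_n$-ACF, the $h_n$-ACF, and the reflection CCF $R_{\psi_n\psi_{n'}}[\tau]$ of~\eqref{eq:ccf-reflection-coefficients}; the diagonal $n=n'$ contributions carry the full Rician ACFs and produce the single-sum term, whereas the off-diagonal $n\neq n'$ contributions survive only through the LoS means $\bar{g}_n\bar{g}_{n'}^*\bar{h}_n\bar{h}_{n'}^*$, since the zero-mean NLoS parts are uncorrelated across distinct elements.

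Finally, for the two cross terms, independence collapses each channel factor to its mean, leaving $\bar{h}_{\mathrm{D}}^*\sum_n \bar{g}_n\bar{h}_n\,\expval{\psi_n[k]}$ together with its complex conjugate; invoking stationarity to write $\expval{\psi_n^*[k-\tau]} = \expval{\psi_n[k]}^*$ lets these combine into the final $2\realpart{\cdot}$ term. I expect the main obstacle to be the careful bookkeeping of the double sum and the explicit justification that the off-diagonal NLoS cross-correlations vanish, as this is precisely where the independence-across-elements hypothesis does the essential work; the remaining steps are routine moment evaluations built on the single-coefficient ACF.
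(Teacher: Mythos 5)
Your proposal is correct and follows essentially the same route as the paper's own proof: the identical decomposition into a direct--direct term, a reflected--reflected double sum, and cross terms, the same single-coefficient Rician ACF $|\bar{h}|^2 + \alpha_h^{|\tau|}\sigma_h^2$ obtained from the AR(1) recurrence, the same diagonal/off-diagonal treatment of the double sum via independence, and the same use of wide-sense stationarity of $\{\psi_n[k]\}$ to merge the two cross terms into the $2\realpart{\cdot}$ expression. There are no gaps; your bookkeeping of where the zero-mean NLoS parts vanish is exactly the argument the paper makes.
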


\begin{proof}[Proof]
    See Appendix \ref{apx:acf-equivalent-channel}.
\end{proof}

From the above result, one can note that the channel \gls{acf} inherits the \textit{uncontrollable} part and the \textit{controllable} part from the equivalent channel in eq. \eqref{eq:equivalent-channel}.

\section{A Randomized Framework for Controlling the Temporal Correlation}\label{sec:rand-framework}

We start by providing a general framework that describes how to control the temporal correlation by using Lemma~\ref{lem:acf} to set the number of \gls{ris}' reflecting elements $N$ and/or designing their configuration $\set{\psi_n[k]}_{n = 1}^N$. Note that through $N$ we select an \gls{ris} of sufficient size to meet the system's temporal correlation requirements.
Then, we study the case with uniformly distributed phase shifts, as in~\cite{Besser2021}.

\subsection{Temporal Correlation under Random Phase Shifts}

We first make the following simplifying assumptions. \textbf{(1)} There is no direct path from the \gls{tx} to the \gls{rx}, \textit{i.e.}, $h_{\mathrm{D}}[k] = 0$. This holds when obstacles  block the direct path between the \gls{tx} and the \gls{rx}, as in dense urban scenarios and industries. \textbf{(2)} The \gls{los} components are predominant in the channels from the \gls{tx} to the \gls{ris}, \textit{i.e.}, $\kappa_{g_n} \rightarrow \infty$ {dB} $, \forall n$. Therefore, these channels are static, \textit{i.e.}, $g_n[k] = \bar{g}_n, \forall n$.
This can be justified by the fact that the \gls{tx} and the \gls{ris} do not move and the deployment of the \gls{ris} is chosen so as to enhance the \gls{los} components between the \gls{tx} and the \gls{ris}. Using these assumptions, the equivalent channel in \eqref{eq:equivalent-channel} can be rewritten as:
\begin{equation}
    h_{\mathrm{eq}}'[k] = \mathbf{\bar{g}}^T \boldsymbol{\Psi}[k] \mathbf{h}[k],
\end{equation}
and its \gls{acf} is given by:
\begin{align}
    \label{eq:acf-equivalent-channel-simplified}
    \hspace{-3mm} R_{h_{\mathrm{eq}}'h_{\mathrm{eq}}'}[\tau] &= \sum_{n = 1}^N \left|\bar{g}_n\right|^2 \left( \left|\bar{h}_n\right|^2 + \alpha_{h_n}^{|\tau|} \sigma_{h_n}^2 \right) R_{\psi_n\psi_n}[\tau] \, +  \nonumber \\
    & + \sum_{n = 1}^N \sum_{\substack{n' = 1 \\ n' \neq n}}^N \bar{g}_n\bar{g}_{n'}^* \bar{h}_n\bar{h}_{n'}^* R_{\psi_n\psi_{n'}}[\tau],
\end{align}
where $\mathbf{\bar{g}} = [\bar{g}_1, \dots, \bar{g}_N ]^T$ and $\mathbf{\bar{h}} = [ \bar{h}_1, \dots, \bar{h}_N ]^T$. We further assume that the $N$ channels from the \gls{ris}' elements to the \gls{rx} share the same set of environmental parameters $\mathcal{E}= \set{\alpha, \kappa, \sigma^2}$, where $\alpha_{h_n} = \alpha$, $\kappa_{h_n} = \kappa$, and $\sigma_{h_n}^2 = \sigma^2, \, \forall n$. This assumption is valid when considering that the process which introduces the time variations is the same for all {$N$} channels and that the receptions occur in the far-field regime~\cite{baddour2005}. Now, let us assume that the phase shifts at a given time $k$ are drawn from a uniform random distribution:
\begin{equation}
    \label{eq:ris-phase-shifts}
    \phi_n'[k] \uniformdist{\pi - \theta}{\pi + \theta}, \, \forall n,
\end{equation}
where $\theta\in[0,\pi]$ is the {phase shifts distribution parameter}. Then, the \gls{ccf} of the \gls{ris}' reflection coefficients is:
\begin{equation}
    \label{eq:ccf-reflection-coefficients-os2}
    R_{\psi_n'\psi_{n'}'}[\tau] =
    \begin{cases}
        1, & \textrm{if} \; n = n'\\
        \sinc^2(\theta), & \textrm{otherwise}
    \end{cases},
\end{equation}
where we used eq. \eqref{eq:ccf-reflection-coefficients}. Substituting~\eqref{eq:ccf-reflection-coefficients-os2} into eq.~\eqref{eq:acf-equivalent-channel-simplified} {and considering that $| \mathbf{\bar{g}}^T \mathbf{\bar{h}} |^2 = \sum_{n = 1}^N \sum_{n' = 1}^N \bar{g}_n\bar{g}_{n'}^* \bar{h}_n\bar{h}_{n'}^*$} results in the \gls{acf} of the equivalent channel in eq. \eqref{eq:acf-equivalent-channel-os2} at the bottom of the page, with
\setcounter{equation}{12} 
\begin{equation}
    \label{eq:eta}
    \eta = \left| \frac{\mathbf{\bar{g}}^T\mathbf{\bar{h}}}{\norm{\mathbf{\bar{g}}}{2} \norm{\mathbf{\bar{h}}}{2}} \right|^2.
\end{equation}
Specifically, the equality $\mathrm{(a)}$ in eq.~\eqref{eq:acf-equivalent-channel-os2} results from the Rice factor in eq.~\eqref{eq:rice-factor}, \textit{i.e.}, from substituting the term $\norm{\mathbf{\bar{h}}}{2}^2 = \sum_{n = 1}^N \left|\bar{h}_n\right|$ by $N\kappa\sigma^2$. Also, from the triangle inequality, $\eta$ lies between $[0,1]$ and is a measure of \textit{orthogonality} between the \gls{los} components of the channels $\mathbf{g}[k]$ and $\mathbf{h}[k]$, depending on $N$ and the positions of the \gls{ris}, \gls{tx}, and \gls{rx} \cite{Albanese2022}.

By using the results for the \gls{ccf} and \gls{acf}, we derive the correlation coefficient between two channel samples delayed by $|\tau|$ samples, as shown in eq.~\eqref{eq:correlation-coefficient-os2} at the bottom of the page. One can notice that the temporal channel correlation depends on: $i$) the delay $|\tau|$ between the channel samples in discrete time, $ii$) the environmental parameters set $\mathcal{E}$, $iii$) the number of \gls{ris}' reflecting elements $N$, and $iv$) the parameter that determines the range of the phase shifts' distribution $\theta$. We make the following remarks about the obtained result.

\begin{rem}
    Regarding $\kappa$ and $\eta$, it is worth noting that, in the absence of a \gls{los} path component from the \gls{ris} to the receiver, \textit{i.e.}, {$\kappa \rightarrow -\infty$ dB}, or when the \gls{los} components are perfectly orthogonal, \textit{i.e.}, $\eta = 0$, the channel correlation is determined only by $\alpha$ and $\theta$. On the other hand, if  $\kappa > 0$ and $\eta > 0$, the correlation coefficient can also be altered by setting $N$.
    \label{remark:num-elements}
\end{rem}

\begin{rem}
    Regarding the distribution parameter $\theta$, we analyze how the temporal channel correlation behaves in the extreme values of its range, $[0,\pi]$. When $\theta = 0$, the resulting \textit{correlation coefficient} from \eqref{eq:correlation-coefficient-os2} simplifies to:
    \setcounter{equation}{14} 
    \begin{equation}
        \label{eq:correlation-coefficient-theta-zero}
        \left. \rho[\tau] \right|_{\theta = 0} = \frac{N\kappa\eta + \alpha^{|\tau|}}{N\kappa\eta + 1}.
    \end{equation}
    In this case, note that one can control the temporal correlation \textit{only} by selecting the number of \gls{ris}' elements $N$. On the other hand, when $\theta = \pi$, the correlation coefficient becomes:
    \begin{equation}
        \left. \rho[\tau] \right|_{\theta = \pi} = \delta[\tau].
    \end{equation}
    Now, observe that the channel samples are totally uncorrelated, corroborating with the findings of \cite{Besser2021}. Recall that the authors of \cite{Besser2021} used $\phi_n'[k] \uniformdist{0}{2\pi}$ to transform a slow-fading channel into a fast-fading one, improving reliability-related metrics. Hence, by tuning $\theta$ and given $N$, we can control the temporal correlation to values in the interval $[\left. \rho[\tau] \right|_{\theta = \pi}, \left. \rho[\tau] \right|_{\theta = 0}]$.
    \label{remark:theta}
\end{rem}

\subsection{Controlling the Temporal Correlation}\label{subsec:control}

Based on Remark \ref{remark:theta}, we present a method to design the {phase shifts distribution} parameter $\theta$ to obtain the desired channel correlation between samples separated from each other by a desired delay. This is based on the following: 

\begin{defn}
    (Project Requirements) The tuple of \textit{project requirements} is $p = (\tilde{\rho}, \tilde{\tau})$, where $0 \leq \tilde{\rho} \leq 1$ is the desired correlation coefficient of two-channel samples delayed by $\tilde{\tau} \in \integerset_+$ samples.
\end{defn}

\noindent\textit{Method.} From eq.~\eqref{eq:correlation-coefficient-os2} and for a constant $N$, the value of $\theta$ for obtaining a channel correlation coefficient $\rho[\pm \tilde{\tau}]=\tilde{\rho}$ is:
\begin{equation}
    \label{eq:theta-design-os2}
    \theta=\underline{\theta}(p) = \sinc^{-1} \left( \sqrt{\frac{(\kappa + 1) \tilde{\rho}}{\tilde{\rho} \kappa + (1 - \tilde{\rho}) N\kappa\eta + \alpha^{|\tilde{\tau}|}}} \right),
\end{equation}
where $\sinc^{-1}(\cdot)$ is the inverse function of $\sinc(\cdot)$ with codomain over the interval $[0, \pi]$.\footnote{In the domain $[0,\pi]$, the $\sinc(\cdot)$ function is partially invertible since it becomes bijective. In the absence of a closed-form expression for $\sinc^{-1}(\cdot)$, numerical methods can be used to calculate it with the required precision.}
By taking into account that $\theta \in [0,\pi]$, the argument in the \gls{rhs} of \eqref{eq:theta-design-os2} must lie in the interval $[0,1]$. Given this, we define the set of feasible project requirements as:
\begin{equation}
    \hspace{-1mm}\mathcal{P}_{\mathrm{feas.}}^{(\theta)} = \set{\left(\tilde{\rho}, \tilde{\tau}\right) \in \realset_+ \times \integerset_+ \left\vert\, 0 \leq \tilde{\rho} \leq \frac{N\kappa\eta + \alpha^{|\tilde{\tau}|}}{N\kappa\eta + 1} \right.}.
\end{equation}
From the above, one can note that the feasible channel correlation is upper-bounded by the environmental parameters $\mathcal{E}$ and the number of \gls{ris} reflecting elements $N$.

\begin{rem}
    \label{rem:N}
    In Remark \ref{remark:num-elements}, we have observed that one can also control the correlation coefficient by changing $N$. We now give an additional result showing the achievable channel correlation when opting for designing $N$.
    {We assume that the \gls{ris}, \gls{tx}, and \gls{rx} are positioned in a way that $\eta$ does not vary with $N$.\footnote{{We leave the analysis of the case where $\eta$ varies with $N$ for future works.}}}
    Considering a constant $\theta$, the value of $N$ for obtaining a correlation coefficient $\rho[\pm \tilde{\tau}]=\tilde{\rho}$ is:
    \begin{equation}
        \label{eq:N-design-os2}
        \underline{N}(p) = \floor{\frac{\left( 1 - \sinc^2(\theta) \right) (\kappa + 1) + \sinc^2(\theta) \left( \tilde{\rho} - \alpha^{|\tilde{\tau}|} \right)}{\sinc^2(\theta) (1 - \tilde{\rho}) \kappa\eta}},
    \end{equation}
    with $N=\underline{N}(p)$. Knowing that $N \geq 1$ and that the denominator of the argument at the \gls{rhs} of \eqref{eq:N-design-os2} must be nonzero, the set of feasible project requirements can be derived as:
    \begin{gather}
        \label{eq:feasible-project-requirements-N-os2}
        \mathcal{P}_{\mathrm{feas.}}^{(N)} = \set{\left. \left(\tilde{\rho}, \tilde{\tau}\right) \in \realset_+ \times \integerset_+ \right\vert \tilde{\rho}_{\min} \leq \tilde{\rho} < 1 }, \text{where}\\
        \hspace{-1mm}\tilde{\rho}_{\min} = \frac{\sinc^2(\theta) \left( \kappa\eta + \alpha^{|\tilde{\tau}|} \right) - \left( 1 - \sinc^2(\theta) \right) (\kappa + 1)}{\sinc^2(\theta) \left( \kappa\eta + 1 \right)}.
    \end{gather}
    From this, one can note that the achievable channel correlation by setting the number of \gls{ris}' elements $N$ is lower-bounded by the environmental parameters $\mathcal{E}$ and the parameter $\theta$.
\end{rem}

To get an overview of the condition in which Remark \ref{rem:N} is valid, Fig.~\ref{fig:eta-rx-position} depicts how $\eta$ changes with $N$ considering different \gls{rx} positions. The \gls{los} channel vectors are calculated with the model in \cite{Albanese2022}. Considering the \textit{right-handed Cartesian} coordinates system, the \gls{ris} is placed parallel to the $xy$-plane with center at coordinates $(0,0,5)$, while the \gls{tx} is at coordinates $(-10,0,0)$.
When the \gls{rx} position is symmetric to the \gls{tx} one w.r.t. the \gls{ris} center, $\eta = 1$ is constant. So Remark \ref{rem:N} is valid for position $(10,0,0)$. In position $(15,0,0)$, it may be valid for $N < 100$ due to the low variation of $\eta$ in this region. However, it does not apply for positions $(10,2,0)$ and $(10,10,0)$ due to the high-amplitude oscillations of $\eta$ with $N$.

\begin{figure}[b]
    \centering
    \includegraphics[width=\columnwidth]{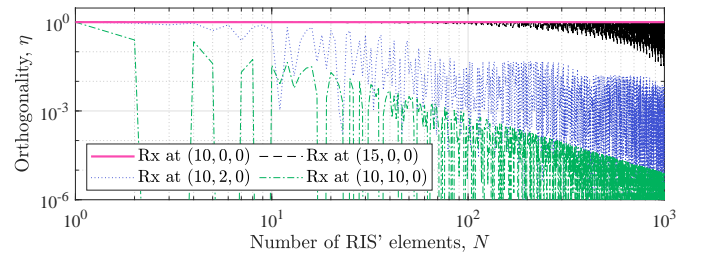}
    \vspace{-7mm}
    \caption{Orthogonality between the \gls{los} components as a function of $N$.}
    \label{fig:eta-rx-position}
\end{figure}

\section{Simulation Results}

In this section, we exemplify by numerical results how the method proposed in Subsection \ref{subsec:control} can be applied to obtain a given project requirement $p=(\tilde{\rho}, \tilde{\tau})$. The results and their respective simulation parameters are given in Figs.~\ref{fig:simulation-results-theta-N} and~\ref{fig:simulation-results-alpha-kappa}. {In the simulations, the coordinate system, the \gls{ris} and \gls{tx} positions, and the method to compute the \gls{los} channel components are the same as in Fig. \ref{fig:eta-rx-position}.}

\begin{figure*}[t]
    \subfloat[
    {$\alpha = 1 - 1.12 \times 10^{-4}$, $\kappa = 6$ dB $N = 100$, \gls{rx} at $(10,10,0)$}]{%
        \includegraphics[width=\columnwidth]{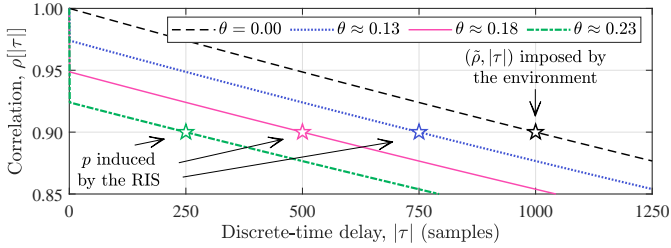}%
        \label{fig:channel-correlation-theta}%
    }
    \hfill
    \subfloat[{$\alpha = 0.992$, $\kappa = -6$ dB, $\theta = 0$, \gls{rx} at $(15,0,0)$}]{%
        \includegraphics[width=\columnwidth]{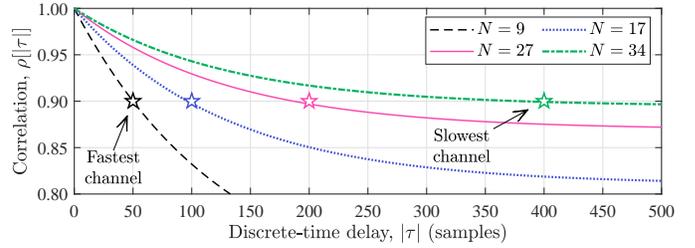}%
        \label{fig:channel-correlation-N}%
    }
    \caption{Channel temporal correlation. The markers indicate the points where the correlation should reach $0.9$ according to the project requirements $p$.}
    \label{fig:simulation-results-theta-N}
\end{figure*}

Fig.~\ref{fig:channel-correlation-theta} depicts the correlation achieved by different $\theta$ with a fixed $N = 100$. {From \eqref{eq:ris-phase-shifts}, recall that $\theta = 0$ implies static \gls{ris} phase shifts equal to $\pi$. Under this condition, $\alpha$ in this result is calculated by eq. \eqref{eq:correlation-coefficient-theta-zero} to obtain $\rho[1000] |_{\theta = 0} = 0.9$.} For the cases where $\theta > 0$, {the phase shifts distribution parameter} $\theta$ is calculated by eq. \eqref{eq:theta-design-os2} to obtain a correlation of $\tilde{\rho} = 0.9$ at the time delays $\tilde{\tau} \in \set{250, 500, 750}$. This result shows that the proposed method can change the temporal channel statistics imposed by the environment. In the sequel, Fig.~\ref{fig:channel-correlation-N} depicts the correlation obtained by different $N$. {In this result, $\alpha$ is calculated using the aforementioned method, but now to obtain $\rho[50] |_{\theta = 0,N=9} = 0.9$.} The values for {$N > 9$} are calculated by eq.~\eqref{eq:N-design-os2} to yield a correlation  of $\tilde{\rho} = 0.9$ at the time delays $\tilde{\tau} \in \set{100, 200, 400}$. Such a result reveals that a fast channel, \textit{i.e.}, a channel with a fast decay correlation, can be slowed down by increasing the number of elements of the \gls{ris}.

\begin{figure}[b]
    \subfloat[{$\kappa = 6$ dB, \gls{rx} at $(10,10,0)$}]{%
        \includegraphics[width=.5\columnwidth]{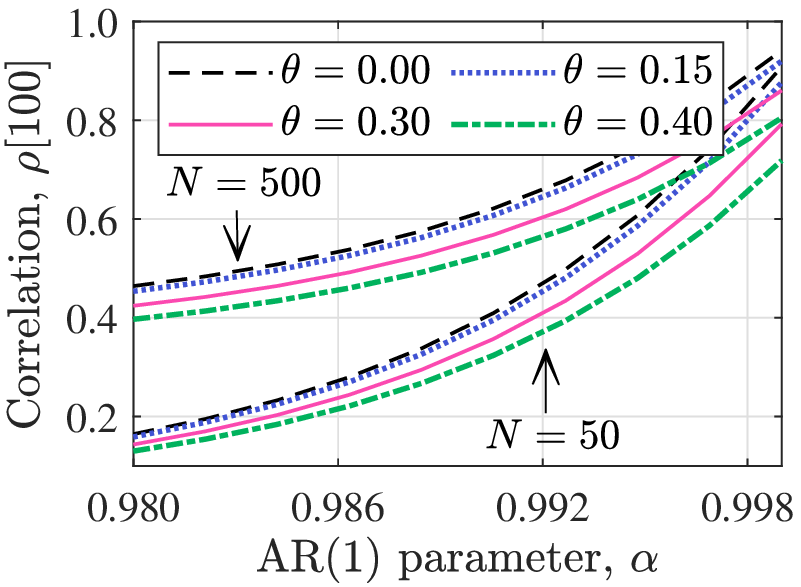}%
        \label{fig:channel-correlation-alpha2}%
    }
    \subfloat[{$N = 100$, $\theta = 0$, Rx at $(10,2,0)$}]{%
        \includegraphics[width=.5\columnwidth]{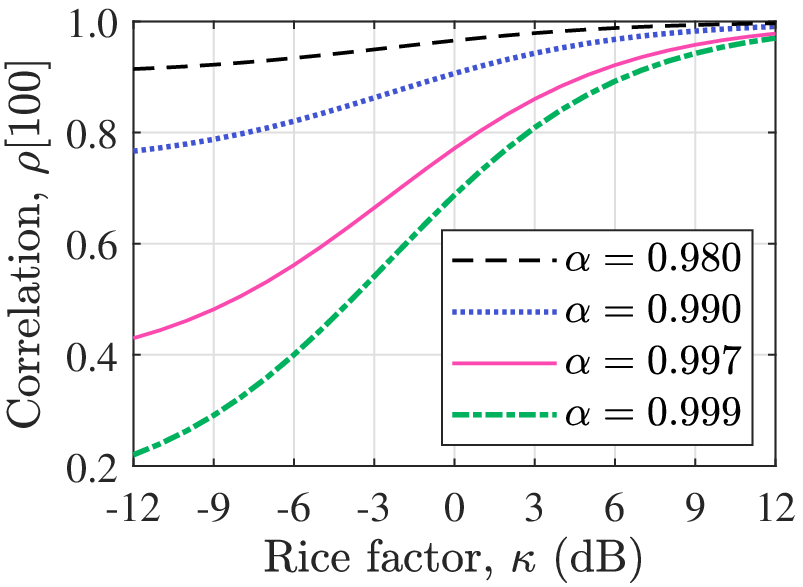}%
        \label{fig:channel-correlation-kappa}%
    }
    \caption{{Channel temporal correlation as a function of (a) $\alpha$ and (b) $\kappa$.}}
    \label{fig:simulation-results-alpha-kappa}
\end{figure}

Fig.~\ref{fig:channel-correlation-alpha2} shows {the impact of} $\theta$ and $N$ in modifying the correlation at $|\tau| = 100$ as a function of $\alpha$, representing different environmental conditions. It demonstrates that the correlation changes quickly with little change in $\alpha$, justifying the use of the \gls{ar1} model to represent both slow- and fast-fading channels. Then, Fig.~\ref{fig:channel-correlation-kappa} depicts the correlation at $|\tau| = 100$ as a function of $\kappa$ and under different $\alpha$.
This result reveals that high $\kappa$ yields correlation values close to $1$ due to the significant increase on the power of the deterministic part of the channel $\mathbf{\bar{h}}$ relatively to the stochastic one $\{\check{h}_n\}_{n = 1}^N$. In other words, as expected, the higher the $\kappa$ parameter, the lower the impact of the \gls{ris} in controlling the environment. It is worth mentioning that low $\kappa$ is typical in scenarios with partially blocked  \gls{los} and/or environments with rich scattering.

\section{Towards a Flexible Block-Fading Model}

In this section, we extend the classical block-fading model \cite{Bjornson2017,Goldsmith2005} to account for the channel correlation control based on the randomized framework proposed in Section \ref{sec:rand-framework}. Define a \textit{coherence block} as a resource block consisting of a number of subcarriers and time samples where the equivalent channel response $z$ can be approximated as constant and flat-fading. Specifically, each coherence block has $\Delta_c=B_c T_c$ complex-valued samples, where $T_c > 0$ is the channel coherence time. Moreover, the channel response (power gain) $z$ of this discrete-time channel follows a given distribution $z \sim {f_Z}$. For example, for Rayleigh fading channels, ${f_Z}$ is exponential. The \gls{ris}-enabled control discussed in Section \ref{sec:rand-framework} can be used to create coherent blocks with different lengths, where the coherence time relates to the {discrete-time} interval $\tilde{\tau}$. Recall that the channel coherence time is defined as the range of time span values over which the channel \gls{acf} is approximately nonzero \cite{Goldsmith2005}. Therefore, using the \gls{ris} to shape the \gls{acf} of the equivalent channel is a path to control $T_c$ and, consequently, changing $\Delta_c$. This generation of coherence blocks with a flexible number of samples can be done by setting $\theta$ and $N$ {to obtain} a project requirement $p$ as described respectively by eqs.~\eqref{eq:theta-design-os2} and~\eqref{eq:N-design-os2}.
Particularly, this flexible block-fading model can improve how the resources are leveraged, enabling the on-demand creation of blocks according to the availability of services with different performance requirements over time.

\section{Conclusion}\label{sec:conclusion}

In this paper, we have studied how an \gls{ris} can change the temporal statistics of the wireless
propagation channel by analyzing the correlation among channel samples using the introduced discrete-time-varying channel model. Then, we proposed a randomized framework to control the relative channel coherence time by setting the number of \gls{ris}' elements and {designing} the distribution of their reflection coefficients, whose effectiveness is corroborated by simulation results. {Our results demonstrate} the possibility of redefining the resource allocation problem as we know it today by creating a flexible block-fading model based on the proposed framework.

\appendices
\section{Proof of the ACF of the Equivalent Channel}\label{apx:acf-equivalent-channel}

Recalling that $h_{\mathrm{D}}[k]$, $\mathbf{g}[k]$, $\mathbf{h}[k]$, and $\boldsymbol{\Psi}[k]$ are mutually independent, the \gls{acf} of eq. \eqref{eq:equivalent-channel} is given by the sum:
\begin{gather}
    \label{eq:Rhappendix}
    R_{h_{\mathrm{eq}}h_{\mathrm{eq}}}[\tau] = \expval{h_{\mathrm{eq}}[k]h_{\mathrm{eq}}^*[k-\tau]} = S_1 + S_2 + S_3,\\
   \text{where } S_1 = \expval{h_{\mathrm{D}}[k] h_{\mathrm{D}}^*[k - \tau]},\\
    \begin{align}
        S_2 &= \mathbb{E} \Big\{ \left(\mathbf{g}^T[k] \boldsymbol{\Psi}[k] \mathbf{h}[k]\right) \times\\
        \nonumber
        &\times \left(\mathbf{g}^T[k-\tau] \boldsymbol{\Psi}[k-\tau] \mathbf{h}[k-\tau]\right)^* \Big\}, \text{ and}
    \end{align}\\
    \begin{align}
        S_3 &= \expval{h_{\mathrm{D}}[k] \left(\mathbf{g}^T[k-\tau] \boldsymbol{\Psi}[k-\tau] \mathbf{h}[k-\tau]\right)^*} +\\
        \nonumber
        &+ \mathbb{E} \Big\{ h_{\mathrm{D}}^*[k-\tau] \left(\mathbf{g}^T[k] \boldsymbol{\Psi}[k] \mathbf{h}[k]\right) \Big\}.
    \end{align}
\end{gather}
Now we evaluate each term independently.
Initially, due to the model in \eqref{eq:rician-model} adopted for $h_{\mathrm{D}}[k]$, $S_1$ can be rewritten as:
\begin{align}
    \nonumber
    S_1 &= \expval{\left(\bar{h}_{\mathrm{D}} + \check{h}_{\mathrm{D}}[k]\right) \left(\bar{h}_{\mathrm{D}} + \check{h}_{\mathrm{D}}[k-\tau]\right)^*}\\
    &= \left|\bar{h}_{\mathrm{D}}\right|^2 + \expval{\check{h}_{\mathrm{D}}[k] \check{h}_{\mathrm{D}}^*[k-\tau]} +\\
    \nonumber
    &+ \expval{\bar{h}_{\mathrm{D}} \check{h}_{\mathrm{D}}^*[k-\tau]} + \expval{\bar{h}_{\mathrm{D}}^* \check{h}_{\mathrm{D}}[k]}.
\end{align}
Since $\bar{h}_{\mathrm{D}}$ is deterministic and $\check{h}_{\mathrm{D}}[k]$ is an \gls{ar1} random process as per \eqref{eq:multipath-component}, the expectations are given respectively by:
\begin{equation}
\begin{cases}
    \expval{\check{h}_{\mathrm{D}}[k] \check{h}_{\mathrm{D}}^*[k-\tau]} = \alpha_{h_{\mathrm{D}}}^{|\tau|} \sigma_{h_{\mathrm{D}}}^2, \\
    \expval{\bar{h}_{\mathrm{D}} \check{h}_{\mathrm{D}}^*[k-\tau]} = \expval{\bar{h}_{\mathrm{D}}^* \check{h}_{\mathrm{D}}[k]} = 0.
\end{cases}    
\end{equation}
So, $S_1$ can be rewritten as:
\begin{equation}
    \label{eq:s1-final}
    S_1 = \left|\bar{h}_{\mathrm{D}}\right|^2 + \alpha_{h_{\mathrm{D}}}^{|\tau|} \sigma_{h_{\mathrm{D}}}^2.
\end{equation}

Expanding the multiplications in $S_2$ results in
\begin{gather}
    S_2 = \textstyle\sum_{n = 1}^N \sum_{n' = 1}^N P_1 P_2 P_3,\\
    \text{where }
    \begin{cases}
        P_1 = \expval{g_n[k] g_{n'}[k-\tau]},\\
        P_2 = \expval{h_n[k] h_{n'}[k-\tau]}, \\
        P_3 = \expval{\psi_n[k] \psi_{n'}[k-\tau]} = R_{\psi_n\psi_{n'}}[\tau],
    \end{cases}    
\end{gather}
with $R_{\psi_n\psi_{n'}}$ defined in eq.~\eqref{eq:ccf-reflection-coefficients}.
From the model in eq.~\eqref{eq:rician-model} adopted for $\mathbf{g}[k]$ and $\mathbf{h}[k]$, and recalling that $\{g_n[k]\}_{n = 1}^N$ and $\{h_n[k]\}_{n = 1}^N$ are mutually independent, $P_1$ and $P_2$ result
\begin{align}
    P_1 &=
    \begin{cases}
        \left|\bar{g}_n\right|^2 + \alpha_{g_n}^{|\tau|} \sigma_{g_n}^2, & \textrm{if} \; n = n'\\
        \bar{g}_n\bar{g}_{n'}^*, & \textrm{otherwise}
    \end{cases}, \\
    P_2 &=
    \begin{cases}
        \left|\bar{h}_n\right|^2 + \alpha_{h_n}^{|\tau|} \sigma_{h_n}^2, & \textrm{if} \; n = n'\\
        \bar{h}_n\bar{h}_{n'}^*, & \textrm{otherwise}
    \end{cases}.
\end{align}
While the results for $n = n'$ are based on the derivation of $S_1$, the results for $n \neq n'$ come from the mean of the time-variant channel coefficient in eq. \eqref{eq:rician-model-moments}. Therefore, given the results for $P_1$, $P_2$, and $P_3$, we can rewrite $S_2$ as:
\begin{align}
    \label{eq:s2-final}
    S_2 &=  \textstyle\sum_{n = 1}^N \sum_{n' = 1, n' \neq n}^N \bar{g}_n\bar{g}_{n'}^* \bar{h}_n\bar{h}_{n'}^* R_{\psi_n\psi_{n'}}[\tau] \; +\\
    \nonumber
    &+ \textstyle\sum_{n = 1}^N \left(\left|\bar{g}_n\right|^2 + \alpha_{g_n}^{|\tau|} \sigma_{g_n}^2\right) \left(\left|\bar{h}_n\right|^2 + \alpha_{h_n}^{|\tau|} \sigma_{h_n}^2\right) R_{\psi_n\psi_n}[\tau].
\end{align}

In $S_3$, the expectation of each multiplication can be rewritten as the multiplication of the expected value of each term:
\begin{equation}
    S_3 = \bar{h}_{\mathrm{D}} \mathbf{\bar{g}}^H \expval{\boldsymbol{\Psi}^*[k-\tau]} \mathbf{\bar{h}}^* + \bar{h}_{\mathrm{D}}^* \mathbf{\bar{g}}^T \expval{\boldsymbol{\Psi}[k]} \mathbf{\bar{h}}.
\end{equation}
Expanding the results of the multiplications in the summations:
\begin{equation}
    S_3 = \bar{h}_{\mathrm{D}} \sum_{n = 1}^N \bar{g}_n^* \bar{h}_n^* \expval{\psi_n^*[k-\tau]} + \bar{h}_{\mathrm{D}}^* \sum_{n = 1}^N \bar{g}_n \bar{h}_n \expval{\psi_n[k]}.
\end{equation}
Considering that $\{\psi_n[k]\}_{n = 1}^N$ are \gls{wss} random processes, $S_3$ can be rewritten as:
\begin{equation}
    \label{eq:s3-final}
    \textstyle
    S_3 = 2 \realpart{\bar{h}_{\mathrm{D}}^* \sum_{n = 1}^N \bar{g}_n \bar{h}_n \expval{\psi_n[k]}}.
\end{equation}

Finally, substituting eqs.~\eqref{eq:s1-final},~\eqref{eq:s2-final}, and~\eqref{eq:s3-final} into eq.~\eqref{eq:Rhappendix}, we obtain eq.~\eqref{eq:acf-equivalent-channel}, completing the proof. \hfill $\qed$

\bibliographystyle{IEEEtran}
\bibliography{references}

\end{document}